\newcommand{\lsN}{{\log\sqrt{N}}}
\newcommand{\ep}{\varepsilon}
\newtheorem{Theorem}{Theorem}
\title{Error Probability Bounds for Binary Relay Trees with Crummy Sensors}
\name{Zhenliang Zhang$^{\star}$  Ali Pezeshki$^{\star}$  William~Moran$^{\dagger}$  Stephen~D.~Howard$^{\ddag}$ and Edwin~K.~P.~Chong$^{\star}$ \thanks{This work was supported in part by AFOSR under Contract FA9550-09-1-0518, and by NSF under Grants ECCS-0700559, CCF-0916314, and CCF-1018472.}}
\address{$^{\star}$ Dept. of Elec. and Comp. Engineering, Colorado State University, Fort Collins, CO 80523, USA \\
    $^{\dagger}$ Dept. of Elec. Engineering, The University of Melbourne, Melbourne, Vic. 3010, AU \\
    $^{\ddag}$ Defence Science and Technology Organization, P.O. Box 1500, Edinburgh 5111, AU}
\begin{document}

\maketitle

\begin{abstract}
We study the detection error probability associated with balanced binary relay trees, in which sensor nodes fail with some probability. We consider $N$ identical and independent crummy sensors,
represented by leaf nodes of the tree.  The root of the tree
represents the fusion center, which makes the final decision between two
hypotheses. Every other node is a relay node, which fuses at most two binary
messages into one binary message and forwards the new message to its
parent node. We derive tight upper and lower bounds for the total error probability
at the fusion center as functions of $N$ and characterize how fast the total
error probability converges to 0 with respect to $N$. We show that the
convergence of the total error probability is sub-linear, with the same decay exponent as
that in a balanced binary relay tree without sensor failures. We also show that the total
error probability converges to 0, even if the individual sensors have total error probabilities that converge to $1/2$ and the failure probabilities that converge to $1$, provided that the convergence rates are sufficiently slow.
\end{abstract}

\begin{keywords}
Binary relay tree, crummy sensors, distributed detection, decentralized detection, hypothesis testing, information fusion, dynamic system, invariant region, error probability, decay rate, sensor network.
\end{keywords}

\section{Introduction}

Consider the \emph{decentralized detection} problem introduced in~\cite{Tenney}: Each sensor makes a measurement and summarizes its measurement into a message. These messages are forwarded to the fusion center, which then makes a final decision.

This decentralized  detection problem has been studied in the context of several different network topologies.
In the \emph{parallel architecture}, also known
as the \emph{star architecture} \cite{Tenney}--\nocite{Chair,Cham,dec,Tsi,Tsi1,Warren,Vis,Poor,Sah,chen1,Liu,chen,Kas}\cite{Chong},\cite{BOOK}, all sensors directly communicate with the fusion center.
When sensor measurements are conditionally independent, the decay rate of the total error probability in the parallel architecture is exponential~\cite{Tsi1}.

Another well-studied configuration is the tandem network  \cite{Tang}--\nocite{Tum,tandem,athans}\cite{Venu},\cite{BOOK}. The decay rate of the error probability in this case is sub-exponential \cite{Venu}. Furthermore, as the number of sensors $N$ goes large, the error probability is $\Omega(e^{-cN^d})$ for some positive constant $c$ and for all $d\in (1/2, 1)$~\cite{tandem}. This configuration represents a situation where the length of the network is the longest possible among all networks with $N$ nodes.

The configuration of bounded-height tree has been studied in \cite{Tang1}--\nocite{Nolte,tree1,tree2,tree3,Pete,Alh,Gubner,Will}\cite{Lin},\cite{BOOK}.
This configuration reduces the transmission cost compared to the parallel configuration.
In the bounded-height tree structure, leaf sensor nodes summarize their measurements and send the new messages to their parent nodes, each of which
fuses all the messages it receives with its own measurement (if any) and then forwards the new message to its parent node at the next level.
This process takes place throughout the tree culminating in the fusion center, where a final decision is made. If only the leaf nodes are sensors making measurements, and all other nodes simply fuse the messages received and forward the new messages to their parents, this tree is known as a \emph{relay tree}. For a bounded-height tree with $\lim _{\tau_N\rightarrow \infty} \ell_N/\tau_N=1$, where $\tau_N$ denotes the total number of nodes and
$\ell_N$  denotes the number of leaf nodes, the optimum error exponent is the same as that of the parallel configuration~\cite{tree1}.

For trees with unbounded height, the convergence
analysis is still largely unexplored. In \cite{Gubner}, the convergence of the total error probability in balanced binary relay trees with unbounded height has been proved. Upper and lower bounds for the total error probability at the fusion center as functions of $N$ have been derived in \cite{Zhang}.
These bounds reveal that the convergence of the total error probability at the fusion center is sub-linear with a decay exponent $\sqrt{N}$.

In this paper, we assume that each of the sensors fails with a certain probability. A failed sensor will not provide a message to its parent node at the next level. We refer to these sensors as \emph{crummy}\footnote{The attentive reader will recognize that our use of the term \textquotedblleft crummy\textquotedblright follows in the footsteps of our great patriarch, Claude E. Shannon \cite{Shannon}.} sensors.  We will derive upper and lower bounds for the total error probability at the fusion center as functions of $N$. Not surprisingly, we find that the decay of the total error probability for each step is worse than the case where there is no sensor failure. But this decay rate is still sub-linear with the same decay exponent $\sqrt{N}$ in the asymptotic regime, regardless of the sensor failure probability.

\section{Problem Formulation}
We consider the problem of binary hypothesis testing between $H_0$ and $H_1$ in a balanced binary relay tree with crummy sensors. As shown in Fig. \ref{fig:tree}, leaf nodes are sensors undertaking initial and independent detections of the same event in a scene. These measurements are summarized into binary messages. If a sensor node works properly, then it forwards the summarized message to its parent node at the next level. Otherwise, with a certain probability the sensor fails in the sense that it does not forward the message upward. Each non-leaf node---except the root, which is the fusion center---is a relay node, which fuses binary messages it receives (if any, and at most two) into one new binary message and forwards the new binary message to its parent node. This process takes place at each intermediate node culminating the fusion center, at which the final decision is made based on the information received.

\begin{figure}[htbp]
\centering
\includegraphics[width=3.5in]{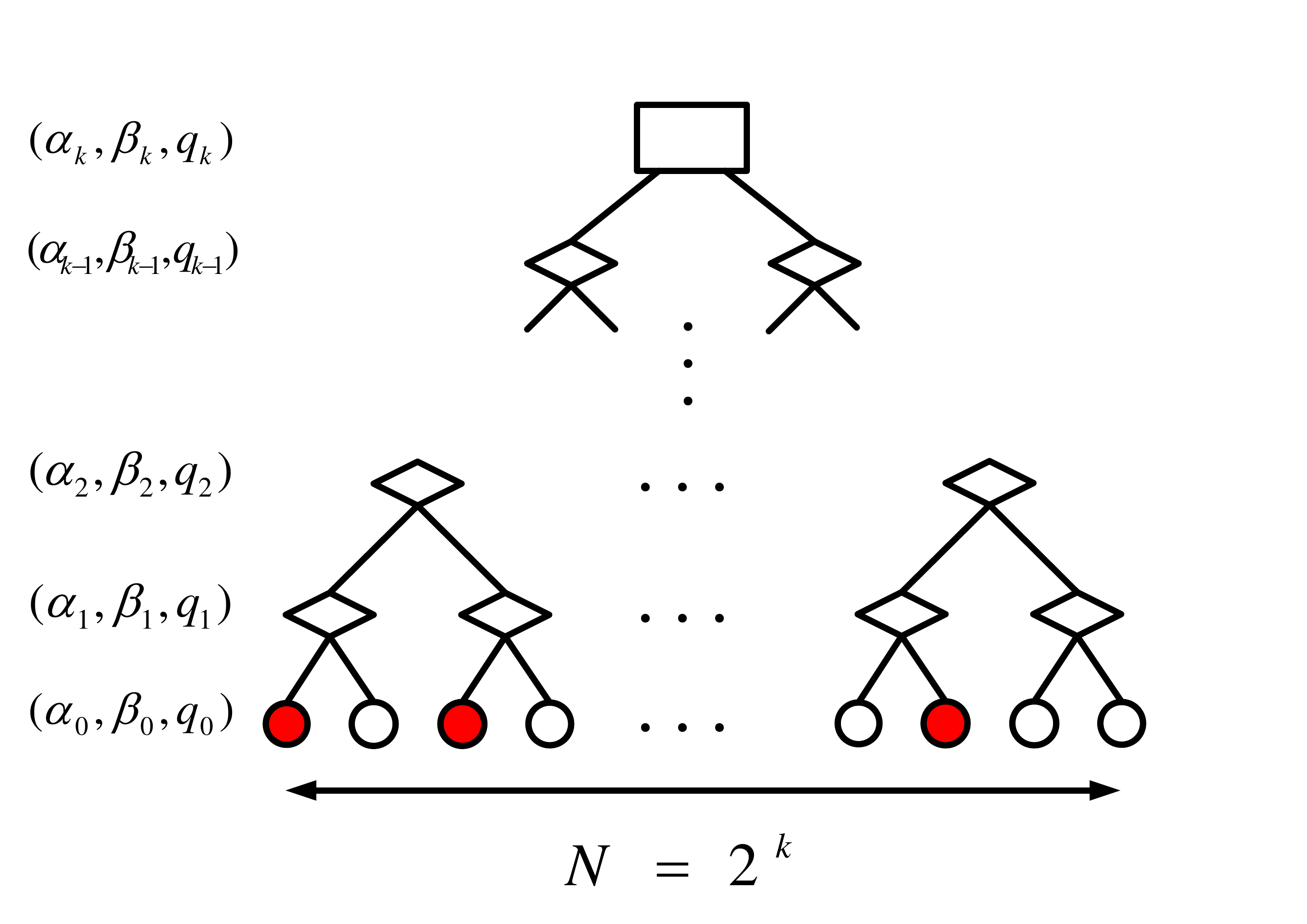}
\caption{A balanced binary relay tree with height $k$. Circles represent sensors making measurements. Diamonds represent relay nodes which fuse binary messages. The rectangle at the root represents the fusion center making an overall decision.}
\label{fig:tree}
\end{figure}
We assume that all sensors are independent given each hypothesis, and that all sensors have identical Type I error probability $\alpha_0$ and identical Type II error probability $\beta_0$. Moreover, we assume that all sensors have identical failure probability $q_0$. Assuming equal prior probabilities, we use the likelihood-ratio test \cite{VT} when fusing binary messages at intermediate relay nodes and the fusion center.

Consider the simple problem of fusing binary messages passed to a node by its two immediate child nodes. Assume that the two child nodes have identical Type I error probability $\alpha$, identical Type II error probability $\beta$, and identical failure probability $q$.

Denote the Type I error, Type II error, and failure probabilities after the fusion by $(\alpha', \beta',q')$.
This parent node fails to provide any message to the node at the next level if and only if both its child nodes fail to forward any message. Hence, we have

\begin{equation}
q'=q^2.
\label{equ:q}
\end{equation}

If one of the child nodes fails and the other one sends its message to the parent node, then Type I and Type II error probabilities do not change since the parent node receives only one binary message. The probability of this event is $2q(1-q)$, in which case we have
\begin{equation}
(\alpha', \beta')=(\alpha,\beta).
\end{equation}

If both child nodes send their messages to the parent node, then the scenario is the same as that in \cite{Gubner} and \cite{Zhang}. The probability of this event is $(1-q)^2$, in which case we have
\begin{equation}
(\alpha', \beta')=\left\{\begin{array}{c}
(1-(1-\alpha)^2, \beta^2),\quad \alpha\leq\beta,\\
 \\
(\alpha^2, 1-(1-\beta)^2),\quad \alpha>\beta.\end{array}\right. \\
\label{equ:rec}
\end{equation}

Let $\bar{\alpha'}$ and $\bar{\beta'}$ be the mean Type I and Type II error probabilities conditioned on the event that at least one of these child nodes forwards its message to the parent node, i.e., the parent node has data. We have
\begin{eqnarray}
&&(\bar{\alpha'}, \bar{\beta'},q')=f(\alpha,\beta,q)\\
&&=\left\{\begin{array}{c}
\left(\frac{(1-q)(2\alpha-\alpha^2)+2q\alpha}{1+q}, \frac{(1-q)\beta^2+2q\beta}{1+q}, q^2\right),  \alpha\leq\beta,\\
 \\
\left(\frac{(1-q)\alpha^2+2q\alpha}{1+q}, \frac{(1-q)(2\beta-\beta^2)+2q\beta}{1+q}, q^2\right),  \alpha>\beta.\end{array}\right.
\label{equ:exp}
\end{eqnarray}

Our assumption is that all sensors have the same error probabilities $(\alpha_0,\beta_0,q_0)$. Therefore by (\ref{equ:exp}), all relay nodes at level $1$ will have the same \emph{error probability triplet} $(\alpha_1,\beta_1,q_1)=f(\alpha_0,\beta_0,q_0)$ (where $\alpha_1$ and $\beta_1$ are the conditional mean error probabilities). Similarly by (4), we can calculate error probability triplets for nodes at all other levels. We have
\begin{equation}
(\alpha_{k+1}, \beta_{k+1}, q_{k+1})=f(\alpha_k, \beta_k, q_k), \quad k=1,2,\ldots,
\label{equ:rel}
\end{equation}
where $(\alpha_k,\beta_k,q_k)$ is the error probability triplet of nodes at the $k$th level of the tree. Notice that if we let $q_0=0$, then the recursive relation reduces to the recursion in \cite{Zhang}.

The relation (\ref{equ:rel}) allows us to consider ${(\alpha_k,\beta_k,q_k)}$ as a discrete dynamic system. For the case where $q_0=0$, we have studied (See \cite{Zhang}) the precise evolution of the sequence $\{(\alpha_k, \beta_k)\}$, derived total error probability bounds as functions of $N$, and established asymptotic decay rates. In this paper, we will study the case where $q_0\neq 0$. We will derive total error probability bounds and determine the decay rate of the total error probability.

To develop intuition, let us start by looking at the single trajectory shown in Fig. \ref{fig:plane}(a), starting at the initial state $(\alpha_0, \beta_0,q_0)$.
We observe that $q_k$ decreases very fast to 0. In addition, as shown in Fig. \ref{fig:plane}(b), the trajectory approaches $\beta=\alpha$ at the beginning. After $(\alpha_k,\beta_k)$ gets too close to $\beta=\alpha$, the next pair $(\alpha_{k+1},\beta_{k+1})$ will be repelled toward the other side of the line $\beta=\alpha$. This behavior is similar to the scenario where $q=0$. For the case where $q=0$, there exist an invariant region in the sense that the system stays in the invariant region once the system enters it \cite{Zhang}. Is there an invariant region for the case where $q\neq0$? We answer this question by precisely describing this invariant region in $\mathds{R}^3$.
\begin{figure}[!th]
\begin{center}
\begin{tabular}{cc}
\includegraphics[width=2.2in]{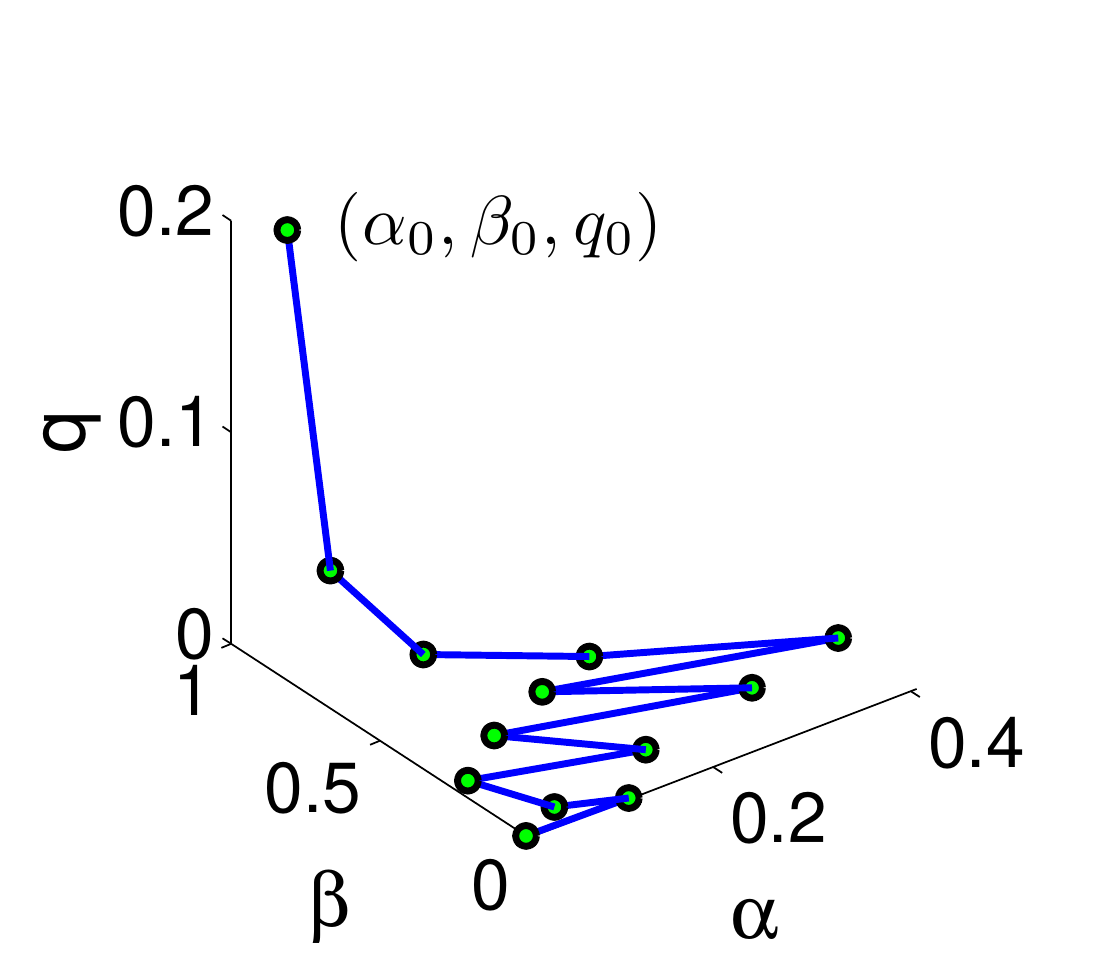}\\ (a)\\
\includegraphics[width=2.2in]{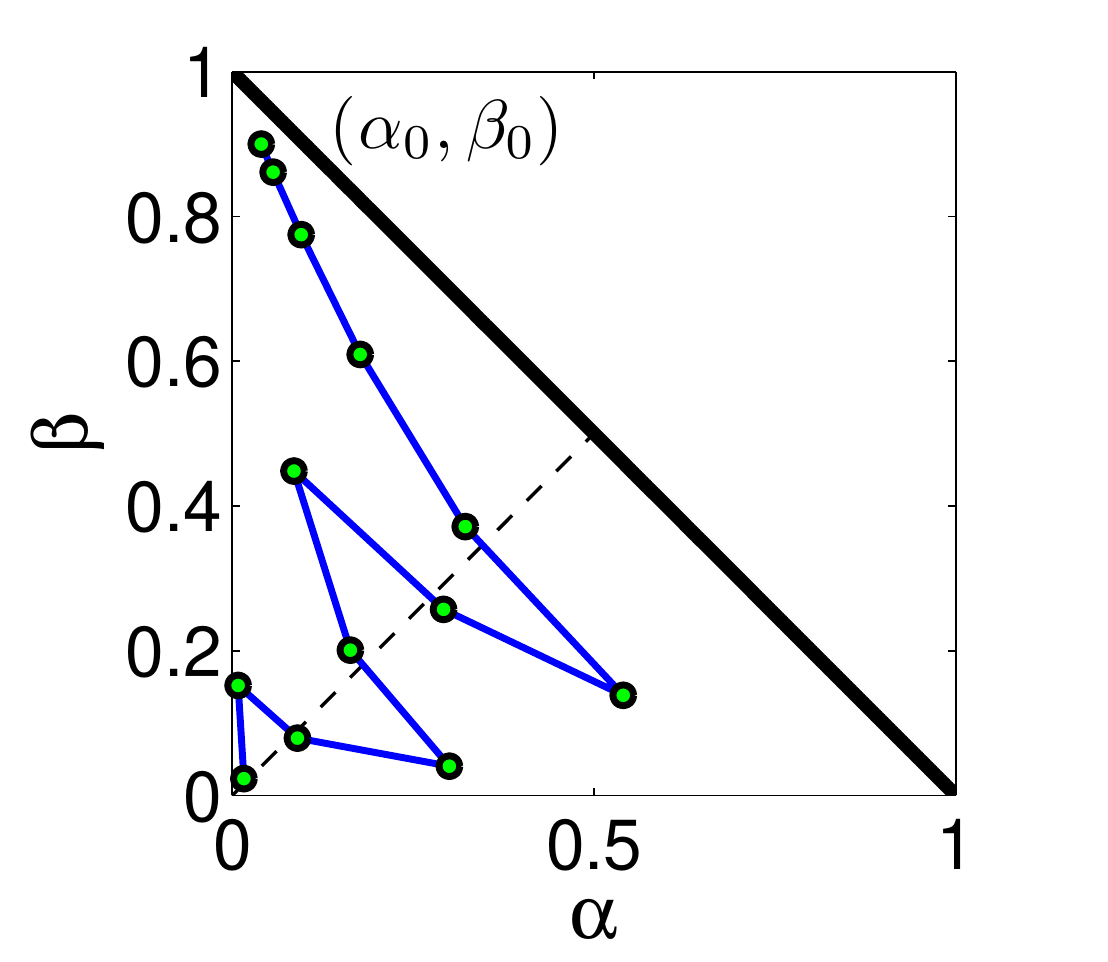}\\
(b)
\end{tabular}
\end{center}
\caption{(a) A typical trajectory of $(\alpha_k,\beta_k,q_k)$ in the $(\alpha, \beta, q)$ coordinates. (b) The trajectory in (a) projected onto the $(\alpha,\beta)$ plane.}
\label{fig:plane}
\end{figure}

\section{The evolution of Type I, Type II, and sensor failure error probabilities}

The relation (\ref{equ:exp}) is symmetric about the hyper-planes $\alpha+\beta = 1$ and $\beta=\alpha$. Thus, it suffices to study the evolution of the dynamic system only in the region bounded by $\alpha+\beta < 1$, $\beta \geq \alpha$, and $0\leq q\leq 1$. Let $\mathcal{U}:=\{(\alpha, \beta)\geq 0|\alpha+\beta<1, \beta\geq\alpha, \text{\space and \space} 0\leq q\leq 1\}$ be this triangular prism. Similarly, define the complementary triangular prism $\mathcal{L}:=\{(\alpha, \beta)\geq 0|\alpha+\beta<1, \beta<\alpha, \text{\space and \space}0\leq q\leq 1 \}$.

First, we denote the following region by $B_1:=\{(\alpha, \beta, q)\in \mathcal{U}|\beta\leq(-q+\sqrt{q^2+(1-q)^2(2\alpha-\alpha^2)+2q(1-q)\alpha})/(1-q)\}$. If $(\alpha_k,\beta_k,q_k)\in B_1$, then the next pair $(\alpha_{k+1},\beta_{k+1},q_{k+1})$  jumps across the plane $\beta=\alpha$ away from $(\alpha_k,\beta_k,q_k)$. More precisely, if $(\alpha_k,\beta_k,q_k)\in \mathcal{U}$, then $(\alpha_k,\beta_k,q_k)\in B_1$ if and only if $(\alpha_{k+1},\beta_{k+1},q_{k+1})\in \mathcal{L}$. This set $B_1$ is identified in Fig.~\ref{fig:b}(a).

It is easy to see from (\ref{equ:exp}) and (\ref{equ:rel}) that, if we start with $(\alpha_0,\beta_0,q_0)\in
\mathcal{U}\setminus B_1$, then before the system enters $B_1$, we have $\alpha_{k+1}>\alpha_k$ and $\beta_{k+1}<\beta_k$. Thus, the system moves towards the $\beta=\alpha$ plane. Therefore, if the sensor number $N$ is sufficiently large, then the system is guaranteed to enter $B_1$.

\begin{figure}[!th]
\begin{center}
\begin{tabular}{cc}
\includegraphics[width=3.5in]{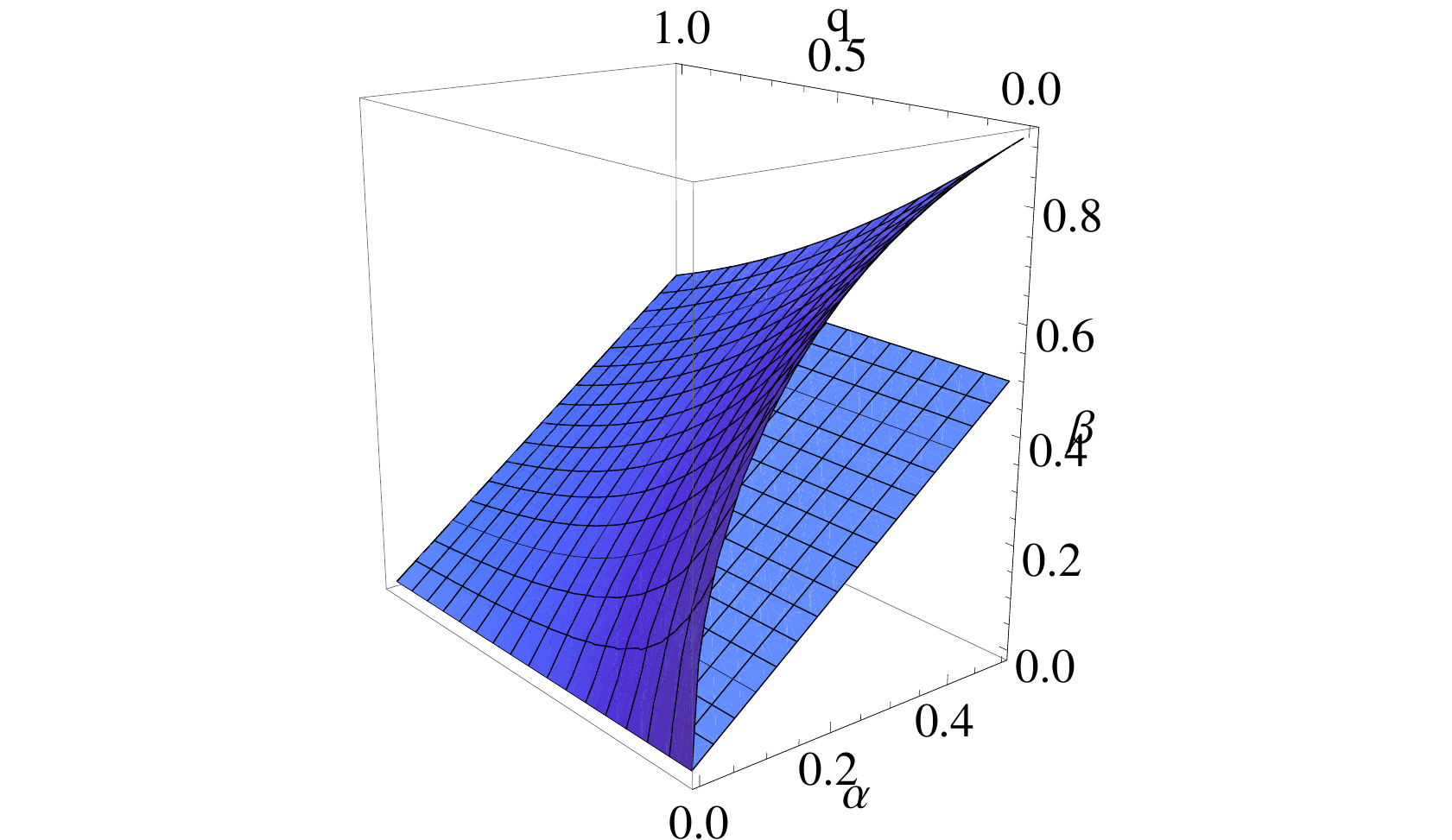}\\
(a)\\
\includegraphics[width=3.5in]{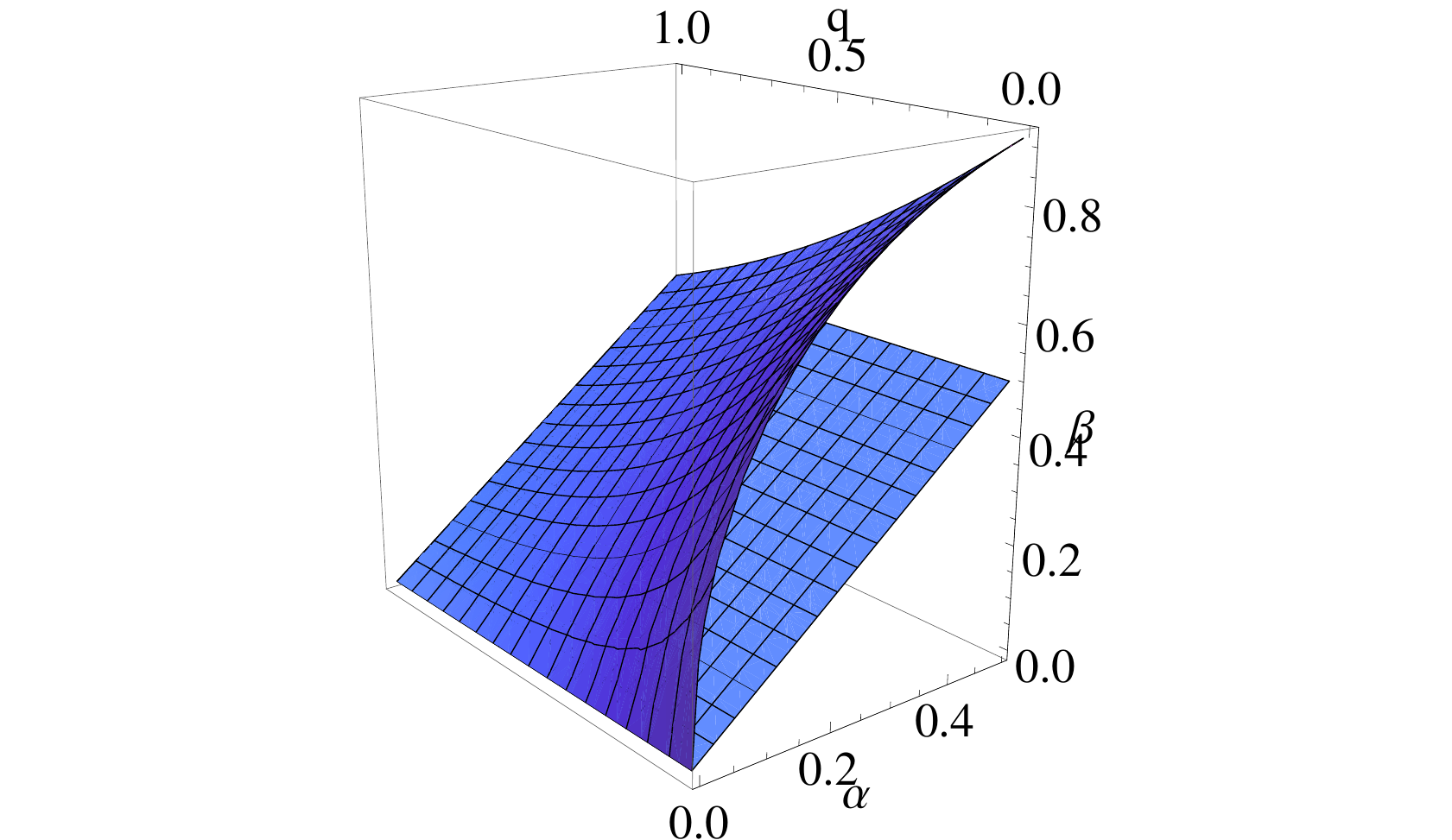}\\
(b)
\end{tabular}
\end{center}
\caption{(a) Region $B_1$ in the $(\alpha, \beta, q)$ coordinates. (b) Region $R_\mathcal{U}$ in the $(\alpha, \beta, q)$ coordinates.}\label{fig:b}
\end{figure}

Next we consider the behavior of the system after it enters $B_1$. If $(\alpha_k, \beta_k,q_k)\in B_1$, we consider the position of the next pair $(\alpha_{k+1}, \beta_{k+1},q_{k+1})$, i.e., consider the \emph{image} of $B_1$ under $f$, denoted by $R_\mathcal{L}$. Similarly we denote the reflection of $R_{\mathcal{L}}$ with respect to $\beta=\alpha$ by $R_{\mathcal{U}}$. We find that $R_\mathcal{U}:=\{(\alpha, \beta, q)\in \mathcal{U}|\beta\leq-\alpha+2(\sqrt{q^2+(1-q^2)\alpha}-q)/(1-q)\}$ (see Fig. \ref{fig:b}(b)).

The sets $R_\mathcal{U}$ and $B_1$ have some interesting properties. We denote the projection of the upper boundary of $R_\mathcal{U}$ and $B_1$ onto the $(\alpha,\beta)$ plane for a fixed $q$ by $R_\mathcal{U}^q$ and $B_1^q$, respectively. It is easy to see that if $q_1\leq q_2$, then $R_\mathcal{U}^{q_1}$ lies above $R_\mathcal{U}^{q_2}$ in the $(\alpha,\beta)$ plane. Similarly, if $q_1\leq q_2$, then $B_1^{q_1}$ lies above $B_1^{q_2}$ in the $(\alpha,\beta)$ plane. Moreover, we have the following Proposition.

\emph{Proposition 1}: $B_1\subset R_\mathcal{U}$.
\begin{proof} $B_1$ and $R_\mathcal{U}$ share the same lower boundary $\beta=\alpha$. Thus, it suffices to proof that the upper boundary of $B_1$ is below that of $R_\mathcal{U}$ for a fixed $q$, i.e., $R_\mathcal{U}^{q}$ lies above $B_1^q$ in the $(\alpha,\beta)$ plane.

The upper boundary of $B_1$ is
\[
\beta=\frac{-q+\sqrt{q^2+(1-q)^2(2\alpha-\alpha^2)+2q(1-q)\alpha}}{1-q}.
\]
The upper boundary of $R_\mathcal{U}$ is
\[
\beta=-\alpha+2\frac{\sqrt{q^2+(1-q^2)\alpha}-q}{1-q}.
\]
Notice that when $q=0$, these boundaries reduce to the boundaries in \cite{Zhang}. We need to prove the following:
\begin{eqnarray*}
&&\frac{-q+\sqrt{q^2+(1-q)^2(2\alpha-\alpha^2)+2q(1-q)\alpha}}{1-q}\\
&\leq&-\alpha+2\frac{\sqrt{q^2+(1-q^2)\alpha}-q}{1-q}.
\end{eqnarray*}
It suffices to show that
\begin{eqnarray*}
&&\sqrt{q^2+(1-q)^2(2\alpha-\alpha^2)+2q(1-q)\alpha}\\
&\leq& -\alpha(1-q)-q+2\sqrt{q^2+(1-q^2)\alpha}.
\end{eqnarray*}
Squaring both sides and simplifying, we have
\begin{eqnarray*}
&&2\sqrt{q^2+(1-q^2)\alpha}(\alpha(1-q)+q) \\
&\leq& 2(q^2+(1-q^2)\alpha)-(1-q)^2(\alpha-\alpha^2).
\end{eqnarray*}
Again squaring both sides and simplifying, we have
\begin{eqnarray*}
&&4(q^2+(1-q^2)\alpha)(q^2+2q(1-q)\alpha+(1-q)^2\alpha^2\\
&&-q^2-(1-q^2)\alpha+(1-q)^2(\alpha-\alpha^2))\\
&\leq& (1-q)^4(\alpha-\alpha^2)^2.
\end{eqnarray*}

Fortuitously, the left hand side turns out to be identically 0. Thus, the inequality holds. The reader can refer to Fig. \ref{fig:RU}(a) and Fig. \ref{fig:RU}(b) for plots of the upper boundaries of $R_\mathcal{U}$ and $B_1$ for two fixed values of $q$.

\end{proof}

\begin{figure}[!th]
\begin{center}
\begin{tabular}{cc}
\includegraphics[width=2.2in]{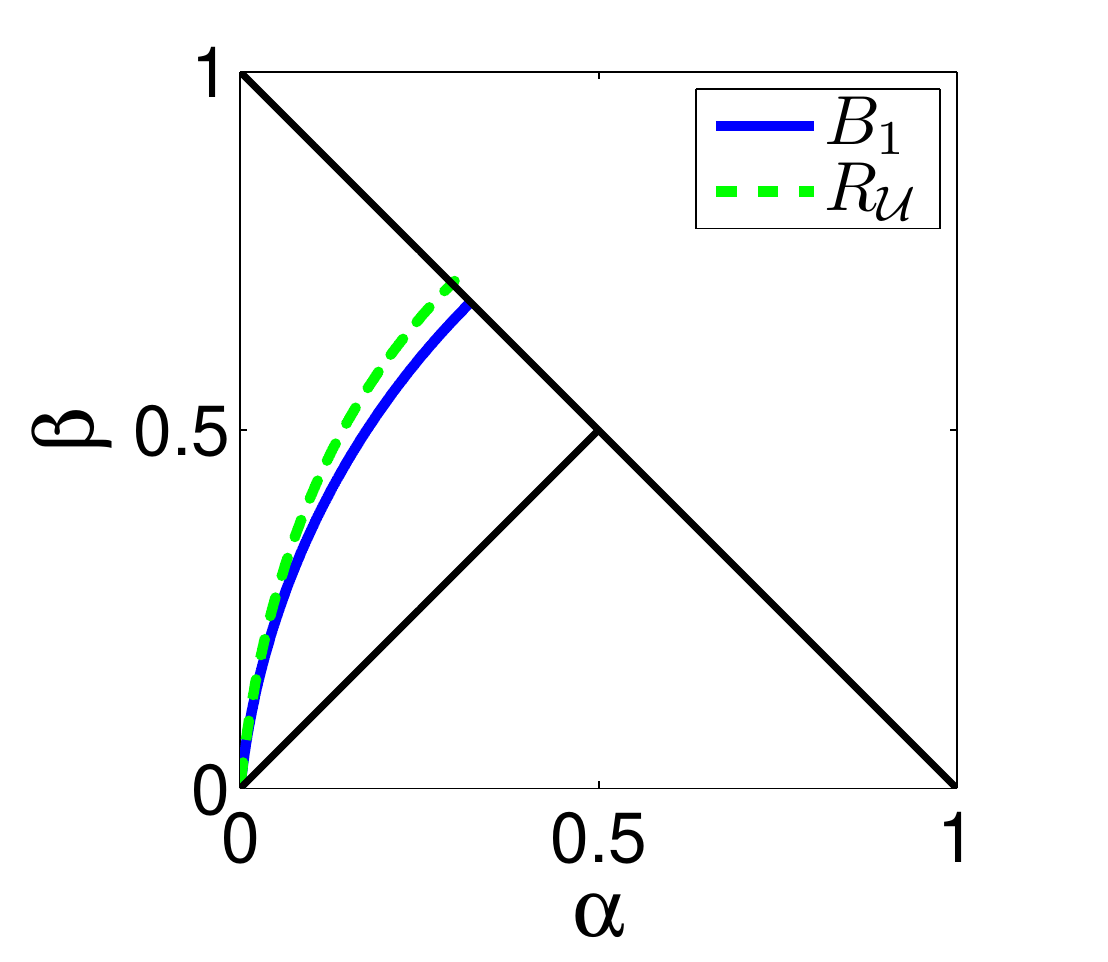} \\
(a)\\
\\\includegraphics[width=2.2in]{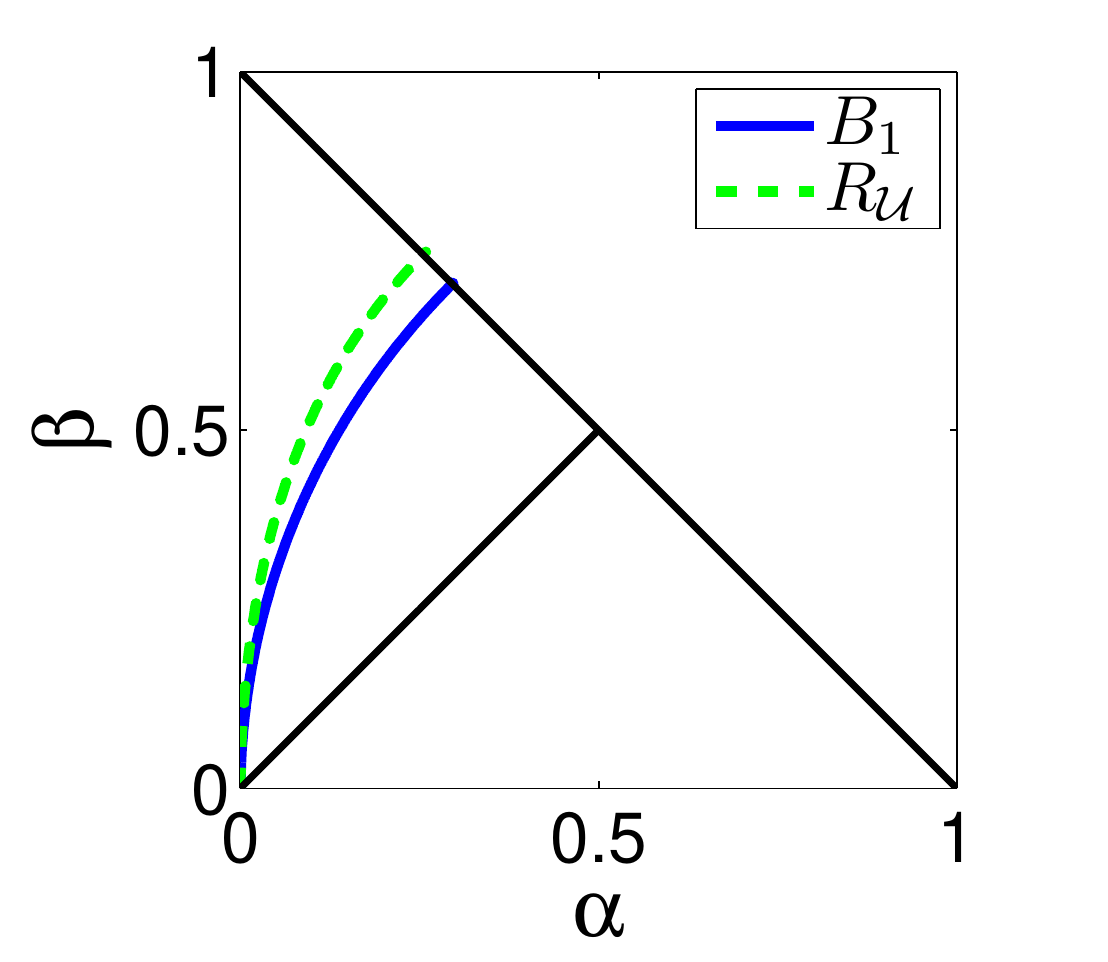}\\
(b)
\end{tabular}
\end{center}
\caption{(a) Upper boundaries for $R_\mathcal{U}$ and $B_1$ for $q=0.1$. (b) Upper boundaries for $R_\mathcal{U}$ and $B_1$ for $q=0.01$.}\label{fig:RU}
\end{figure}
We denote the region $R_\mathcal{U} \cup R_\mathcal{L}$ by $R$. We show below that $R$ is an \emph{invariant region} in the sense that once the system enters $R$, it stays there.

\emph{Proposition 2}: If $(\alpha_{k_0},\beta_{k_0},q_{k_0})\in R$ for some $k_0$, then $(\alpha_k,\beta_k,q_k)\in R$ for all $k\geq k_0$.
\begin{proof} Without lost of generality, we assume $(\alpha_k,\beta_k,q_k)\in R_{\mathcal{U}}$.
We know that $R_{\mathcal{L}}$ is the image of $\mathcal{U}$ in $\mathcal{L}$. Thus if the next state $(\alpha_{k+1},\beta_{k+1},q_{k+1})\in \mathcal{L}$, then it must be inside $R_{\mathcal{L}}$. We already have $q_{k+1}\leq q_k$, which indicates that $R_\mathcal{U}^{q_{k+1}}$ lies above $R_\mathcal{U}^{q_{k}}$ in the $(\alpha,\beta)$ plane. Moreover, for a fixed $q$, the upper boundary $R_\mathcal{U}^q$ is monotone increasing in the $(\alpha,\beta)$ plane. We already know that $\alpha_{k+1}>\alpha_k$ and $\beta_{k+1}<\beta_k$. As a result, if the next state $(\alpha_{k+1},\beta_{k+1},q_{k+1})\in\mathcal{U}$, then the next state is in fact inside $R_\mathcal{U}$.

\end{proof}
We have shown that the system enters $B_1$ after certain levels of fusion. By the fact that $B_1\subset R_\mathcal U$, we conclude that the system enters $R_\mathcal U$ at some level of the tree and stays inside the invariant region $R$ at all levels above.

In the next section, we will consider the step-wise reduction of the total error probability when the system lies inside the invariant region and deduce upper and lower bounds for the total error probability.

\section{Error probability bounds}
The total detection error probability for a node at the $k$th level is $(\alpha_k+\beta_k)/2$ because of the equal-prior assumption. Let $L_k=\alpha_k+\beta_k$, which is twice the total error probability. We will derive bounds on $\log L_k^{-1}$, whose growth rate is related to the rate of converge of $L_k$ to $0$. (Throughout this paper, $\log$ stands for the binary logarithm.)

\emph{Proposition 3}: Let $L_{k+1}^{(q)}$ be the total error probability at the next level from the current state $(\alpha_k,\beta_k,q)$. Suppose that $(\alpha_k,\beta_k,q_1)\text{ and }(\alpha_k,\beta_k,q_2) \in \mathcal{U}$. If $q_1\leq q_2$, then
\[
L_{k+1}^{(q_1)}\leq L_{k+1}^{(q_2)}.
\]
\begin{proof} From (\ref{equ:exp}), we have
\[
L_{k+1}^{(q)}=\frac{1-q}{1+q} L_{k+1}^0+\frac{2q}{1+q} (\alpha_k+\beta_k),
\]
where $L_{k+1}^{(0)}=2\alpha_k-\alpha_k^2+\beta_k^2$.

It is easy to show that $2\alpha_k-\alpha_k^2+\beta_k^2\leq \alpha_k+\beta_k$.
\begin{eqnarray*}
&& 2\alpha_k-\alpha_k^2+\beta_k^2\leq \alpha_k+\beta_k \\
&\Longleftrightarrow& \alpha_k-\alpha_k^2\leq \beta_k-\beta_k^2.
\end{eqnarray*}
Since $\alpha_k+\beta_k\leq 1$ and $\beta_k\geq\alpha_k$, we have $\beta_k-1/2\leq 1/2-\alpha_k$.
Notice that the function $x-x^2$ peaks at $x=1/2$. Hence, $2\alpha_k-\alpha_k^2+\beta_k^2\leq \alpha_k+\beta_k$.

Notice that
\[
\frac{1-q}{1+q}+\frac{2q}{1+q}=1.
\]
Therefore we can write
\[
L_{k+1}^{(q_1)}=p_1 L_{k+1}^0+(1-p_1) (\alpha_k+\beta_k),
\]
where $p_1=(1-q_1)/(1+q_1)$.
Let $p_2=(1-q_2)/(1+q_2)$, it is easy to see that $p_1\geq p_2$. Thus we have
\begin{eqnarray*}
L_{k+1}^{(q_1)}&=&p_1 L_{k+1}^0+(1-p_1) (\alpha_k+\beta_k) \\
&&+(p_2-p_1)L_{k+1}^0-(p_2-p_1)L_{k+1}^0 \\
&\leq& p_1 L_{k+1}^0+(1-p_1) (\alpha_k+\beta_k)\\
&& +(p_2-p_1)L_{k+1}^0-(p_2-p_1)(\alpha_k+\beta_k)\\
&=&L_{k+1}^{(q_2)}.
\end{eqnarray*}
\end{proof}
From Proposition 3, we immediately deduce that
\[
L_{k+1}^{(0)}\leq L_{k+1}^{(q_1)}.
\]
This means that the decay of the total error probability for a single step is the fastest when $q=0$. As a result, for the case where $q\neq 0$, the step-wise shrinkage of the total error probability cannot be faster than the case where $q=0$, where the asymptotic decay exponent is $\sqrt{N}$ \cite{Zhang}.

Notice that from (\ref{equ:q}), the decay of $q_k$ is quadratic, which is much faster than the decay rate of $L_k$. Moreover, it is easy to see that the decay of $q_k$ is faster than the decay of $\alpha_k$ and of $\beta_k$.
Hence, it is natural to assume that $q_k\leq \alpha_k$ and $q_k\leq \beta_k$ when we consider the step-wise shrinkage of the total error probability in the invariant region. Next we give upper and lower bounds for the ratio $L_{k+2}/L_k^2$.

\emph{Proposition 4}: Suppose that $(\alpha_k,\beta_k,q_k)\in R$, $\alpha_k\geq q_k$, and $\beta_k\geq q_k$. Then,
\[
\frac{1}{2}\leq\frac{L_{k+2}}{L_k^2}\leq 4.
\]
\begin{proof} First, we consider the lower bound.
The evolution of the system is
\[
(\alpha_k,\beta_k,q_k)\rightarrow(\alpha_{k+1},\beta_{k+1},q_k^2)\rightarrow(\alpha_{k+2},\beta_{k+2},q_k^4).
\]
From Proposition 3, we have
\[
L_{k+2}^{(0)}\leq L_{k+2},
\]
where $L_{k+2}^{(0)}=2\alpha_{k+1}-\alpha_{k+1}^2+\beta_{k+1}^2$ as defined before.
To prove $1/2\leq L_{k+2}/L_k^2$, it suffices to show that $1/2\leq L_{k+2}^{(0)}/L_k^2$.

If $(\alpha_k,\beta_k)\in R_u\setminus B_1$, then
\[
\frac{L_{k+2}^{(0)}}{L_k^2}=\frac{2\alpha_{k+1}-\alpha_{k+1}^2+\beta_{k+1}^2}{(\alpha_k+\beta_k)^2}.
\]

We have
\[
\alpha_{k+1}=\frac{1-q_k}{1+q_k}(2\alpha_k-\alpha_k^2)+\frac{2q_k}{1+q_k}\alpha_k\geq\alpha_k
\]
and
\[
\beta_{k+1}=\frac{1-q_k}{1+q_k}\beta_k^2+\frac{2q_k}{1+q_k}\beta_k\geq\beta_k^2.
\]

Thus, it suffices to show that
\[
\frac{2\alpha_{k}-\alpha_{k}^2+\beta_{k}^4}{(\alpha_k+\beta_k)^2}\geq \frac{1}{2}.
\]
It is easy to see that
\[
2(2\alpha_{k}-\alpha_{k}^2)\geq 1-(1-\alpha_k)^4.
\]
Hence, it suffices to show that
\[
(1-(1-\alpha_k)^4+\beta_k^4)\geq(\alpha_k+\beta_k)^2,
\]
which has been proved in \cite{Zhang}.

If $(\alpha_k,\beta_k)\in B_1$, then it suffices to show that
\[
\frac{\alpha_{k+1}^2+2\beta_{k+1}-\beta_{k+1}^2}{(\alpha_k+\beta_k)^2}\geq \frac{1}{2}.
\]
We have
\[
\alpha_{k+1}=\frac{1-q_k}{1+q_k}(2\alpha_k-\alpha_k^2)+\frac{2q_k}{1+q_k}\alpha_k\geq\alpha_k
\]
and
\[
\beta_{k+1}=\frac{1-q_k}{1+q_k}\beta_k^2+\frac{2q_k}{1+q_k}\beta_k\geq\beta_k^2.
\]
Thus, it suffices to proof
\[
\frac{\alpha_k^2+\beta_k^2}{(\alpha_k+\beta_k)^2}\geq\frac{1}{2},
\]
which is easy to see.

Next we prove the upper bound of the ratio $L_{k+2}/L_k^2$.


If $(\alpha_k,\beta_k)\in R_u\setminus B_1$, then
\begin{eqnarray*}
&&\frac{L_{k+2}}{L_k^2}\leq\frac{L_{k+1}}{L_k^2}\\
&&=\frac{1-q_k}{1+q_k}\frac{2\alpha_k-\alpha_k^2+\beta_k^2}{(\alpha_k+\beta_k)^2}+\frac{2q_k}{1+q_k}(\alpha_k+\beta_k).
\end{eqnarray*}
It is easy to see that
\[
\frac{2q_k}{1+q_k}(\alpha_k+\beta_k)\leq 1.
\]
Next, we can prove that
\[
\frac{2\alpha_k-\alpha_k^2+\beta_k^2}{(\alpha_k+\beta_k)^2}\leq 2,
\]
which is equivalent to
\[
\phi(\alpha_k,\beta_k):=2\alpha_k-3\alpha_k^2-\beta_k^2-4\alpha_k\beta_k\leq 0.
\]
We have
\[
\frac{\partial \phi}{\partial \beta_k}=-2\beta_k-4\alpha_k\leq0.
\]
Thus, we can consider the lower boundary of this region which is the upper boundary of $B_1$.
\[
\beta=\frac{-q+\sqrt{q^2+(1-q)^2(2\alpha-\alpha^2)+2q(1-q)\alpha}}{1-q}.
\]
Denote $\varphi(\alpha,q):=\sqrt{q^2+(1-q)^2(2\alpha-\alpha^2)+2q(1-q)\alpha}$. We have
\begin{eqnarray*}
\phi(\alpha_k,\beta_k)&=&-(q_k^2+q_k^2+(1-q_k)^2(2\alpha_k-\alpha_k^2)\\
&&+2q_k(1-q_k)\alpha_k-2q_k\varphi(\alpha_k,q_k))/(1-q_k)^2\\
&&-4\alpha_k\beta_k+2\alpha_k-3\alpha_k^2\\
&=& \frac{2q_k\beta_k}{1-q_k}-4\alpha_k\beta_k-\frac{2q_k\alpha_k}{1-q_k}-2\alpha_k^2 .
\end{eqnarray*}
It is easy to see that
\[\frac{2q_k\beta_k}{1-q_k}-4\alpha_k\beta_k\leq 0.\]
Hence, we have
\[
\frac{1-q_k}{1+q_k}\frac{2\alpha_k-\alpha_k^2+\beta_k^2}{(\alpha_k+\beta_k)^2}\leq 2,
\]
and
\[\frac{L_{k+2}}{L_k^2}\leq 3.\]

For the case where $(\alpha_k, \beta_k)\in B_1$, we prove that the ratio is upper bounded by $4$. The evolution of the system is
\[
(\alpha_k,\beta_k,q_k)\rightarrow(\alpha_{k+1},\beta_{k+1},q_k^2)\rightarrow(\alpha_{k+2},\beta_{k+2},q_k^4).
\]
It is easy to see that
\[
L_{k+2}^{(q_k)}\geq L_{k+2},
\]
where $L_{k+2}^{(q_k)}$ denotes the total error probability if we use $q_k$ to calculate from $L_{k+1}$ to $L_{k+2}$. Therefore, it suffices to prove that
\[
L_{k+2}^{(q_k)}-4L_{k}^2=\alpha_{k+2}+\beta_{k+2}-4(\alpha_k+\beta_k)^2\leq 0.
\]
We have
\[
\beta_{k+1}=\frac{1-q_k}{1+q_k}\beta_k^2+\frac{2q_k}{1+q_k}\beta_k.
\]
From the assumption that $\beta_k\geq q$, we have
\[
\frac{\partial \beta_{k+1}}{\partial \beta_k}=\frac{2(1-q_k)}{1+q_k}\beta_k+\frac{2q_k}{1+q_k}\leq 4\beta_k.
\]
It is easy to get that
\begin{eqnarray*}
\beta_{k+2}&=&\frac{1-q_k}{1+q_k}(2\beta_{k+1}-\beta_{k+1}^2)+\frac{2q_k}{1+q_k}\beta_{k+1}\\
&=&-\frac{1-q_k}{1+q_k}\beta_{k+1}^2+\frac{2}{1+q_k}\beta_{k+1}.
\end{eqnarray*}
Therefore, we have
\[
\frac{\partial \beta_{k+2}}{\partial \beta_k}=-2\frac{1-q_k}{1+q_k}\beta_{k+1}\frac{\partial \beta_{k+1}}{\partial \beta_k}+\frac{2}{1+q_k}\frac{\partial \beta_{k+1}}{\partial \beta_k}\leq 8\beta_k.
\]
Thus,
\[
\frac{\partial L_{k+2}^{(q_k)}-4L_{k}^2}{\partial \beta_k}\leq 8\beta_k-8\alpha_k-8\beta_k\leq 0.
\]
Therefore, we can consider the lower boundary of $B_1$, $\beta_k=\alpha_k$. We have
%
%
\begin{eqnarray*}
L_{k+2}^{(q_k)}-4L_{k}^2=\frac{4(1-q_k)^2(1-q_k)}{(1+q_k)^3}\alpha_k^2
-4\frac{(1-q_k)^2}{(1+q_k)^2}\alpha_k^3\\+\frac{2(1-q_k)^2}{(1+q_k)^2}\alpha_k^2+\frac{8q_k}{(1+q_k)^2}\alpha_k-16\alpha_k^2\leq 0,
\end{eqnarray*}
which holds in region $B_1$. Hence, the ratio is upper bounded by $4$ in this region.

\end{proof}

Proposition 4 gives rise to bounds on the change in the total error probability every two steps: $L_{k+2}\leq  4L_{k}^2$ and $L_{k+2}\geq L_k^2/2$. From these, we can derive bounds for $\log L_k^{-1}$ for even-height trees, i.e., $k=\log N$ is even. Let $P_N=L_{\log N}$, namely, the total error probability at the fusion center. We will derive bounds for $\log P_N^{-1}$.
\begin{Theorem}
If $(\alpha_0,\beta_0,q_0)\in R$ and $\log N$ is even, then
\begin{eqnarray*}
\sqrt{N} \left(\log L_0^{-1} - \frac{2\log{\sqrt{N}}}{\sqrt{N}}\right)
\leq \log P_N^{-1}\\ \leq \sqrt{N}\left(\log L_0^{-1}+\frac{\lsN}{\sqrt N}\right).
\end{eqnarray*}
\end{Theorem}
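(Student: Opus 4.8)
The plan is to iterate the two-step estimates of Proposition~4 across the $\log\sqrt N$ consecutive pairs of levels that separate the sensor layer from the fusion center, and then pass to the logarithmic scale.

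First I would verify that Proposition~4 applies at every even level $k=0,2,\dots,\log N$. By hypothesis $(\alpha_0,\beta_0,q_0)\in R$, and Proposition~2 then gives $(\alpha_k,\beta_k,q_k)\in R$ for all $k\ge0$; moreover $(\ref{equ:q})$ forces $q_k$ to decay quadratically, hence far faster than $\alpha_k$ and $\beta_k$, so $q_k\le\min(\alpha_k,\beta_k)$ along the trajectory. Thus the hypotheses of Proposition~4 hold at each such level and $\tfrac12 L_k^2\le L_{k+2}\le 4 L_k^2$. The assumption $(\alpha_0,\beta_0,q_0)\in R$ is precisely what spares us from first analyzing the transient during which the system enters $R$.

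Next set $n=\log\sqrt N$, which is a nonnegative integer since $\log N$ is even, and note $2^n=\sqrt N$ and $P_N=L_{\log N}=L_{2n}$. A one-line induction on $m$ from $L_{k+2}\le 4L_k^2$ yields $L_{2m}\le 4^{\,2^m-1}L_0^{\,2^m}$, so at $m=n$ we get $P_N\le 4^{\,\sqrt N-1}L_0^{\,\sqrt N}$; the symmetric induction from $L_{k+2}\ge\tfrac12 L_k^2$ gives $P_N\ge 2^{-(\sqrt N-1)}L_0^{\,\sqrt N}$. Taking $-\log$ of both chains turns these multiplicative bounds on $P_N$ into additive bounds on $\log P_N^{-1}$, namely $\sqrt N\log L_0^{-1}$ up to a correction coming from the accumulated constants; rewriting that correction in the $\sqrt N$-normalized form $\sqrt N\bigl(\log L_0^{-1}\pm(\text{correction})/\sqrt N\bigr)$ yields the two inequalities of the theorem.

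The inductions are routine; the delicate step is the $\log$-scale bookkeeping — keeping careful track of how the per-step factors $4$ and $\tfrac12$ propagate through the $n=\log\sqrt N$ squarings, together with the $\pm1$ offsets produced by the geometric sums $\sum 2^{\,\cdot}$, so that the resulting correction collapses to the claimed order $\lsN$ rather than something larger. As a consistency check I would specialize to $q_0=0$: by Proposition~3 the failure-free per-step map shrinks $L_{k+1}$ the most, the recursion reduces to the one in \cite{Zhang}, and the bounds should then reduce to the failure-free bounds established there.
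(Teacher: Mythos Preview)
Your overall strategy matches the paper's: invoke Proposition~2 to keep the trajectory in $R$, apply Proposition~4 across the $\log\sqrt N$ two-step blocks, and take logarithms. The gap is precisely in the step you yourself flag as ``delicate.'' Your induction is computed correctly: from $L_{k+2}\le 4L_k^{\,2}$ one obtains $L_{2m}\le 4^{\,2^m-1}L_0^{\,2^m}$, and symmetrically $L_{2m}\ge 2^{-(2^m-1)}L_0^{\,2^m}$. Taking $-\log$ at $m=n=\log\sqrt N$ then gives
\[
\sqrt N\,\log L_0^{-1}-2(\sqrt N-1)\ \le\ \log P_N^{-1}\ \le\ \sqrt N\,\log L_0^{-1}+(\sqrt N-1),
\]
so the correction term is of order $\sqrt N$, not $\log\sqrt N$. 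There is no mechanism by which the geometric sum $\sum_{j<n}2^{\,j}=2^n-1$ ``collapses'' to $n$: each squaring doubles the exponent already accumulated on the constant, and that growth is genuine. Your bounds therefore do not recover the inequalities stated in the theorem; the hoped-for collapse to order $\lsN$ does not occur from the estimates you have written down.

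For comparison, the paper's proof writes the iterate as $L_k=\bigl(\prod_{i=1}^{k/2}a_i\bigr)L_0^{\,2^{k/2}}$ with exactly $k/2$ factors $a_i\in[1/2,4]$, and bounding the resulting sum of $\lsN$ logarithms termwise by $[-1,2]$ yields the $\pm\lsN$ correction immediately. You should reconcile this with your own induction: already $L_4=a_1L_2^{\,2}=a_1a_0^{\,2}L_0^{\,4}$ carries three factors (with multiplicity), not two, and in general iterating $L_{k+2}=a_kL_k^{\,2}$ produces $2^{k/2}-1$ factors rather than $k/2$. Either the paper is silently reindexing its $a_i$ in a way your write-up has not captured, or the product formula as written is inconsistent with the two-step recursion; in the latter case the claimed $\lsN$ correction is not actually a consequence of Proposition~4 alone, and neither your argument nor the paper's supplies what is missing.
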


\begin{proof} If $(\alpha_0, \beta_0,q_0) \in R$, then we have $(\alpha_k, \beta_k,q_k)\in R$ for $k=0,1,\ldots,\log N-2$. From Proposition 4, we have
\[
L_{k+2}=a_k  L_{k}^2
\]
for $k=0,1,\ldots,\log N-2$ and some $a_k \in [1/2,4]$. Therefore, for $k=2,4,\ldots, \log N $, we have
\[
L_k = \left(\prod_{i=1}^{k/2} a_i\right) L_0^{2^{k/2}},
\]
where $a_i \in [1/2,4]$. Substituting $k=\log N$, we have
\[
P_N
= \left(\prod_{i=1}^{\lsN} a_i\right) L_0^{2^{\lsN}}
= \left(\prod_{i=1}^{\lsN} a_i\right) L_0^{\sqrt{N}}.
\]
Hence,
\begin{align*}
\log P_N^{-1}
&= -\left(\sum_{i=1}^{\lsN} \log a_i\right) + \sqrt{N}\log L_0^{-1}.
\end{align*}
Notice that $\log L_0^{-1}>0$ and for each $i$,
$-1 \leq \log a_i \leq 2$. Thus,
\begin{align*}
\log P_N^{-1} &\leq \sqrt{N}\log L_0^{-1}+\lsN\\
&=\sqrt{N}\left(\log L_0^{-1}+\frac{\lsN}{\sqrt N}\right).
\end{align*}
Finally,
\begin{align*}
\log P_N^{-1}
&\geq  -2\lsN+ \sqrt{N}\log L_0^{-1}\\
&= \sqrt{N} \left(\log L_0^{-1} -
\frac{2\lsN}{\sqrt{N}}\right).
\end{align*}
\end{proof}

For odd-height trees, we need to calculate the decrease in the total error probability in a single step. For this, we have the following Proposition.

\emph{Proposition 5}: If $(\alpha_k,\beta_k,q_k)\in \mathcal{U}$, then we have
\[
\frac{L_{k+1}}{L_k^2}\geq 1
\] and
\[
\frac{L_{k+1}}{L_k}\leq 1.
\]
\begin{proof}
To prove $L_{k+1}/L_k^2\geq1$, it suffices to prove that
\begin{align*}
&\frac{1-q_k}{1+q_k}(2\alpha_k-\alpha_k^2+\beta_k^2-(\alpha_k+\beta_k)^2)\\
&+\frac{2q_k}{1+q_k}(\alpha_k+\beta_k-(\alpha_k+\beta_k)^2) \geq 0,
\end{align*}
which is easy to see.

To prove $L_{k+1}/L_k\leq1$, it suffices to prove that
\begin{align*}
&\frac{1-q_k}{1+q_k}(2\alpha_k-\alpha_k^2+\beta_k^2-(\alpha_k+\beta_k))\\
&+\frac{2q_k}{1+q_k}(\alpha_k+\beta_k-(\alpha_k+\beta_k)) \leq 0,
\end{align*}
which is easy to see.

\end{proof}

From Propositions 4 and 5, we give bounds for the total error probability at the fusion center for trees with odd height.

\begin{Theorem}If $(\alpha_0,\beta_0,q_0)\in R$, then
\begin{eqnarray*}
\sqrt{\frac{N}{2}} \left(\log L_0^{-1} -  \frac{2\log{\sqrt{\frac{N}{2}}}}{\sqrt{\frac{N}{2}}}\right)
\leq -\log P_N \\
\leq \sqrt{2N}\left(\log L_0^{-1}+\frac{\log{\sqrt{\frac{N}{2}}}}{\sqrt {2N}}\right).
\end{eqnarray*}
\end{Theorem}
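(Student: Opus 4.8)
The plan is to reduce the odd-height case to the even-height case of Theorem~1, by peeling off a single fusion step at the \emph{bottom} of the tree and then running the two-step recursion of Proposition~4.

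Since $\log N$ is odd, $\log N - 1 = \log(N/2)$ is even. By the $\alpha\leftrightarrow\beta$ symmetry of the dynamics we may assume the state lies in $R_{\mathcal U}\subset\mathcal U$, so Proposition~5 applies to the transition from level $0$ to level $1$ and gives $L_0^2\le L_1\le L_0$, i.e.\ $\log L_0^{-1}\le\log L_1^{-1}\le 2\log L_0^{-1}$. By Proposition~2 the state at level $1$ is still in $R$, and since $q_k$ decays quadratically (hence faster than $\alpha_k$ and $\beta_k$) the hypotheses $\alpha_k\ge q_k$, $\beta_k\ge q_k$ of Proposition~4 persist along the iteration. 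Applying Proposition~4 to the transitions $1\to3\to5\to\cdots\to\log N$ gives $L_{k+2}=a_kL_k^2$ with $a_k\in[1/2,4]$, and unrolling exactly as in the proof of Theorem~1 yields
\[
P_N = L_{\log N} = \left(\prod_{i=1}^{\log\sqrt{N/2}} a_i\right) L_1^{2^{\log\sqrt{N/2}}} = \left(\prod_{i=1}^{\log\sqrt{N/2}} a_i\right) L_1^{\sqrt{N/2}}.
\]

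Taking logarithms, $\log P_N^{-1} = -\sum_{i=1}^{\log\sqrt{N/2}}\log a_i + \sqrt{N/2}\,\log L_1^{-1}$, where each $\log a_i\in[-1,2]$. For the upper bound I would use $-\sum\log a_i\le\log\sqrt{N/2}$ together with $\log L_1^{-1}\le 2\log L_0^{-1}$, which produces the coefficient $2\sqrt{N/2}=\sqrt{2N}$ in front of $\log L_0^{-1}$ and the lower-order term $\log\sqrt{N/2}$; for the lower bound I would use $-\sum\log a_i\ge -2\log\sqrt{N/2}$ together with $\log L_1^{-1}\ge\log L_0^{-1}$, giving the coefficient $\sqrt{N/2}$ and the correction $-2\log\sqrt{N/2}$. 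Rearranging these two estimates gives precisely the stated inequalities.

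The only genuinely non-mechanical decision is \emph{where} to spend the leftover single step: peeling it at the bottom (level $0\to1$, as above) versus at the top (level $\log N-1\to\log N$) leads to different lower-order constants, and only the bottom choice reproduces the $\log\sqrt{N/2}$ term with the coefficients claimed in the theorem. Beyond that, the remaining work is the same bookkeeping as in Theorem~1, so I do not anticipate any real obstacle other than confirming that the hypotheses of Proposition~4 remain in force throughout the iteration, which follows from the invariance in Proposition~2 and the fast decay of $q_k$.
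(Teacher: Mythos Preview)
Your proposal is correct and follows essentially the same approach as the paper: peel off the single step at the bottom via Proposition~5 (obtaining $L_0^2\le L_1\le L_0$), then apply the two-step recursion of Proposition~4 over the remaining $\log(N/2)$ even levels and unroll as in Theorem~1. The only cosmetic difference is that the paper encodes the first step as $L_1=\tilde a\,L_0^2$ (resp.\ $L_1=\tilde a\,L_0$) with $\tilde a\ge 1$ (resp.\ $\tilde a\le 1$) and carries $\tilde a$ through the product, whereas you keep $L_1$ explicit and bound $\log L_1^{-1}$ at the end; the resulting inequalities are identical.
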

\begin{proof}
By Proposition 5, we have
\[
L_{1}=\widetilde{a} L_{0}^2
\]
for some $\widetilde{a} \geq 1$.

By Proposition 4, we have
\[
L_{k+2}=a_k L_{k}^2
\]
for $k=1,3,\ldots,\log N-2$ and some $a_k \in [1/2,4]$. Hence, we can write
\[
L_{k}=\widetilde{a}\left(\prod_{i=1}^{(k-1)/2} a_i\right) L_0^{2^{(k+1)/2}},
\]
where $1/2\leq a_i\leq 4$ for $i=1,2,\ldots,(k-1)/2$ and $\widetilde{a}\geq 1$.
Let $k=\log N$, we have
\[
P_N
= \widetilde{a}\left(\prod_{i=1}^{\log{\sqrt{\frac{N}{2}}}} a_i\right) L_0^{2^{\log{\sqrt{2N}}}}
= \widetilde{a}\left(\prod_{i=1}^{\log{\sqrt{\frac{N}{2}}}} a_i\right) L_0^{\sqrt{2N}},
\]
and so
\begin{align*}
\log P_N^{-1}
&= -\log \widetilde{a}-\left(\sum_{i=1}^{\log{\sqrt{\frac{N}{2}}}} \log a_i\right) + \sqrt{2N}\log L_0^{-1}.
\end{align*}
Notice that $\log L_0^{-1}>0$ and for each $i$,
$ \log a_i \geq -1$. Moreover, $ \log\widetilde{a}\geq0 $. Hence,
\begin{eqnarray*}
\log P_N^{-1} &\leq& \sqrt{2N}\log L_0^{-1}+\log{\sqrt{\frac{N}{2}}}\\
&=&\sqrt{2N}\left(\log L_0^{-1}+\frac{\log{\sqrt{\frac{N}{2}}}}{\sqrt {2N}}\right).
\end{eqnarray*}

By Proposition 5, we can write
\[
L_{1}=\widetilde{a} L_{0}
\]
for  some $\widetilde{a} \leq 1$. Thus,
\[
L_{k}=\widetilde{a}\left(\prod_{i=1}^{(k-1)/2} a_i\right) L_0^{2^{(k-1)/2}},
\]
where $1/2\leq a_i\leq 4$ for $i=1,2\ldots,(k-1)/2$ and $ \widetilde{a}\leq1$.
Hence,
\[
P_N
= \widetilde{a}\left(\prod_{i=1}^{\log{\sqrt{\frac{N}{2}}}} a_i\right) L_0^{2^{\log{\sqrt{\frac{N}{2}}}}}
= \widetilde{a}\left(\prod_{i=1}^{\log{\sqrt{\frac{N}{2}}}} a_i\right) L_0^{\sqrt{\frac{N}{2}}}
\]
and so
\begin{align*}
\log P_N^{-1}
&= -\log \widetilde{a}-\left(\sum_{i=1}^{\log{\sqrt{\frac{N}{2}}}} \log a_i\right) + \sqrt{\frac{N}{2}}\log L_0^{-1}.
\end{align*}

Notice that $\log L_0^{-1}>0$ and for each $i$, $-1 \leq \log a_i \leq 2 $ and $\log {\widetilde{a}} \leq 0 $. Thus,
\begin{eqnarray*}
\log P_N^{-1}
&\geq& -2\log{\sqrt{\frac{N}{2}}} + \sqrt{\frac{N}{2}}\log L_0^{-1} \\
&=& \sqrt{\frac{N}{2}} \left(\log L_0^{-1} -
\frac{2\log{\sqrt{\frac{N}{2}}}}{\sqrt{\frac{N}{2}}}\right).
\end{eqnarray*}

\end{proof}

\section{Asymptotic Rates}
In this section, we first consider the asymptotic decay rate of the total error probability with respect to $N$. We compare the rate with that of balanced binary relay trees without sensor failures. Then we allow the sensors to be asymptomatically bad, in the sense that $q_0\rightarrow 1$ and $\alpha_0+\beta_0\rightarrow 1$. We prove that the total error probability still converges to $0$ provided the convergence of $q_0$ and $\alpha_0+\beta_0$ is sufficiently slow.
\subsection{Asymptotic decay rate}

 Notice that when $N$ is very large,
the sequence $\{(\alpha_k, \beta_k,q_k)\}$ enters the invariant region $R$ at some level and stays inside afterward. Therefore the decay rate in the invariant region determines the asymptotic rate. Because our error probability bounds for odd-height trees differ from those of even-height trees by a constant term,
without lost of generality, we will consider trees with even height to calculate the decay rate.

\emph{Proposition 6}: If $L_0=\alpha_0+\beta_0$ is fixed, then
\begin{equation*}
\log P_N^{-1} \sim \log L_0^{-1}\sqrt{N}.
\end{equation*}

\begin{proof}
If $L_0=\alpha_0+\beta_0$ is fixed, then by Theorem 1 we immediately see that $P_N\to 0$ as $N\to\infty$ ($\log P_N^{-1}\to\infty$) and
\begin{eqnarray*}
1-2\frac{\log{\sqrt N}}{\log L_0^{-1}\sqrt N} \leq\frac{\log P_N^{-1}}{\log L_0^{-1}\sqrt{N}}
 \leq 1+\frac{\log{\sqrt N}}{\log L_0^{-1}\sqrt N}.
\end{eqnarray*}
In addition, because $\log{\sqrt{N}}/\sqrt{N}\to 0$, we have
\[
\frac{\log P_N^{-1}}{\log L_0^{-1}\sqrt{N}}\rightarrow 1,
\]
which means
\begin{equation*}
\log P_N^{-1} \sim \log L_0^{-1}\sqrt{N}.
\end{equation*}
\end{proof}
This implies that the convergence of the total error probability is sub-exponential with decay exponent $\sqrt N$. Compared to the decay exponent for the case where $q=0$ (no sensor failures), the asymptotic rate does not change when we have crummy sensors, even though the step-wise shrinkage for the crummy sensor case is worse.

Given $L_0\in(0,1)$ and $\ep\in (0,1)$, suppose that we wish to determine
how many sensors we need to have so that $P_N\leq\ep$. The solution is simply to find an
$N$ (e.g., the smallest) satisfying the inequality
\[
\sqrt{N} \left(\log L_0^{-1} - \frac{2\log{\sqrt{N}}}{\sqrt{N}}\right)
\geq -\log\ep.
\]
The smallest $N$ grows like $\Theta((\log\ep)^2)$ (cf., \cite{Zhang}, in which the growth rate is the same, and \cite{Gubner}, where a looser bound was derived).

\subsection{Asymptotically bad sensors}
First we consider the case where $q_0$ depends on $N$ (denoted by $q_0^{(N)}$). We wish to have the failure error probability at the fusion center $q_{\log N}$ to converge to $0$.

If $q_0^{(N)}$ is bounded by some constant $q \in (0,1)$ for all $N$, then clearly $q_{\log N}\to 0$. So henceforth suppose that $q_0^{(N)}\rightarrow 1$, which means that the sensors are asymptotically arbitrarily unreliable.

\emph{Proposition 7}:
Suppose that  $q_0^{(N)}=1-\eta_N$ with $\eta_N\to 0$.
Then, $q_{\log N}\rightarrow 0$ if and only if $\eta_N=\omega(1/N)$ (i.e., $\eta_N N\rightarrow \infty$).

\begin{proof}
From (\ref{equ:q}), we have
\[
q_k=(q_0^{(N)})^{2^k}.
\]
Letting $k=\log N$, we can write
\[
q_{\log N}=(q_0^{(N)})^N,
\]
or equivalently,
\[
\log q_{\log N} ^{-1}= N\log \left((q_0^{(N)})^{-1}\right).
\]
It is easy to see that $q_{\log N}\rightarrow 0$ if and only if $N\log(1-\eta_N)^{-1} \rightarrow \infty$.
But as $x\to 0$, $-\log(1-x)\sim x/\ln(2)$. Hence,
$q_{\log N}\to 0$ if and only if $\eta_N N\to \infty$.

\end{proof}
Now suppose that $c_1/N\leq \eta_N\leq c_2/N$. In
this case, for large $N$ we deduce that
\[
c_1\leq \log q_{\log N} ^{-1}\leq c_2,
\]
or equivalently,
\[
2^{-c_2}\leq q_{\log N}\leq 2^{-c_1}.
\]

Finally, if $\eta_N=o(1/N)$ (i.e., $\eta_N$ converges to $0$
strictly faster than $1/N$), then $q_{\log N}\to 1$.

Next we allow the detection error probability of individual sensors to depend on $N$, denoted by $L_0^{(N)}$.

If $L_0^{(N)}$ is bounded by some constant $L \in (0,1)$ for all $N$, then clearly $P_N\to 0$.
It is more interesting to consider $L_0^{(N)}\to 1$, which means that sensors are asymptotically bad.

\emph{Proposition 8}:
Suppose that $L_0^{(N)}=1-\eta_N$ with
$\eta_N\to 0$.
Then, $P_N\rightarrow 0$ if and only if $\eta_N=\omega(1/\sqrt{N})$.

\begin{proof}
For
sufficiently large $N$,
\[
\sqrt{N}  \frac{\log \left((L_0^{(N)})^{-1}\right)}{2}
\leq \log P_N^{-1} \leq 2\sqrt{N}\log \left((L_0^{(N)})^{-1}\right).
\]
We conclude that $P_N\to 0$ if and only if
\[
\sqrt{N}\log\left( (L_0^{(N)})^{-1}\right)\to \infty.
\]
Therefore,
\[
\sqrt{N}\log \left((L_0^{(N)})^{-1}\right) = -\sqrt{N}\log (1-\eta_N).
\]
But as $x\to 0$, $-\log(1-x)\sim x/\ln(2)$. Hence,
$P_N\to 0$ if and only if $\eta_N\sqrt{N}\to \infty$ or $\eta_N=\omega(1/\sqrt{N})$.

\end{proof}
Now suppose that $c_1/\sqrt{N}\leq \eta_N\leq c_2/\sqrt{N}$. In
this case, for large $N$ we deduce that
\[
c_1\leq \log P_N^{-1}\leq c_2,
\]
or equivalently,
\[
2^{-c_2}\leq P_N\leq 2^{-c_1}.
\]

Finally, if $\eta_N=o(1/\sqrt{N})$ (i.e., $\eta_N$ converges to $0$
strictly faster than $1/\sqrt{N}$), then $P_N\to 1$.

\section{Conclusion}
We have studied the detection performance of balanced binary relay trees with crummy sensors. We have shown that there exists an invariant region in the space of $(\alpha, \beta, q)$ triplets. We have also developed total error probability bounds at the fusion center as  functions of $N$ for both even-height trees and odd-height trees.
These bounds imply that the total error probability converges to $0$ sub-linearly, with a decay exponent that is essentially $\sqrt N$. Compared to balanced binary relay trees with no sensor failures, the step-wise shrinkage of the total error probability for the crummy sensor case is slower, but the asymptotic decay rate is the same.
In addition, we allow all sensors to be asymptotically bad, in which case we deduce necessary and sufficient conditions for the total error probability to converge to $0$.

\bibliographystyle{IEEEbib}

\end{document}